\documentclass [12pt]{article}
\usepackage {a4}
\usepackage [cp1250]{inputenc}
\usepackage [english]{mskrzypczak}

\title{Equational theories of profinite structures}

\author{Michał Skrzypczak\footnote{Author supported by ERC Starting Grant "Sosna"}}

\affil{Institute of Informatics\\University of Warsaw}

% makra ==========

\newcommand{\mods}{\ensuremath{\mathbb W\;}\xspace}

\newcommand{\promods}{\ensuremath{\widehat{\mathbb W\;}}\xspace}

\newcommand{\bcap}{\ensuremath{\mathcal B}\xspace}

\newcommand{\reg}{\ensuremath{\textrm reg}}

\newcommand{\forms}{{\ensuremath{\Phi}}\xspace}

\newcommand{\prof}[1]{\widehat{{#1}^\ast}}

% początek =======

\begin{document}

\maketitle

\abstract{In this paper we consider a general way of constructing profinite structures based on a given framework --- a countable family of objects and a countable family of recognisers (e.g. formulas). The main theorem states:
\begin{quote}
A subset of a family of recognisable sets is a lattice if and only if it is definable by a family of profinite equations.
\end{quote}

This result extends Theorem 5.2 from \cite{pin_duality_equational} expressed only for finite words and morphisms to finite monoids.

One of the applications of our theorem is the situation where objects are finite relational structures and recognisers are first order sentences. In that setting a simple characterisation of lattices of first order formulas arise.
}

\section{Introduction}

The following situation is very popular in computer science: the expressive power of a countable set of syntactical objects $\forms$ is studied over a countable family of structures $\mods$.

The following examples illustrate this situation:

\begin{enumerate}
\item $\mods=A^\ast$ and $\forms=\{A^\ast\to M:M\textrm{ is a finite monoid}\}$,
\item $\mods$ are finite trees and $\forms=\{\varphi: \varphi\textrm{ is a first order sentence}\}$,
\item $\mods$ is a set of all finite trees and $\forms$ is a family of all deterministic bottom-up tree automata,
\item $\mods$ are all finite graphs and $\forms$ is the family of all FO or MSO formulas.
\end{enumerate}

One of the natural ideas how to represent \emph{recognition} is to treat a syntactical object $\varphi\in\forms$ as a function $\fun{\varphi}{\mods}{K_\varphi}$ to a finite set $K_\varphi$. Such a function recognises sets of objects of the form
$$\varphi^{-1}(V)\subseteq\mods,$$
for $V\subseteq K_\varphi$. Such sets are usually called regular or recognisable.

All the examples presented above fall into this schema: formula is a function to $\{\bot,\top\}$, homomorfizm is a function to a finite monoid, deterministic automaton maps a tree into the state reached at the root.

In this paper we work with a very general setting of families of objects $\mods$ and of recognisers $\forms$. Using adequate topology on $\mods$ we show how to define a profinite structure $\promods$ extending $\mods$. Moreover we prove that its possible to extend recognisers $\varphi\in\forms$ to all profinite objects $w\in\promods$.

The following theorem is the main result.

\begin{theorem}\label{t:lattices}
A family $\mathcal M$ of recognisable sets of objects is a lattice if and only if it is defined by a set of profinite equations.
\end{theorem}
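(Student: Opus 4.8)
The plan is to prove the two implications separately, using throughout the correspondence (set up earlier, when $\promods$ is constructed) between a recognisable set $L\subseteq\mods$ and its closure $\overline{L}\subseteq\promods$: this closure is clopen, the assignment $L\mapsto\overline{L}$ is injective, and it commutes with finite unions and intersections. Writing a profinite equation as a pair $u\le v$ with $u,v\in\promods$, I declare that $L$ \emph{satisfies} $u\le v$ when $u\in\overline{L}$ implies $v\in\overline{L}$. With this reading the whole theorem becomes a Priestley-style duality statement about $\promods$.

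For the direction from equations to lattices, suppose $\mathcal M$ is the class of all recognisable sets satisfying a fixed set $E$ of equations. Both $\emptyset$ and $\mods$ satisfy every equation, since their closures are $\emptyset$ and $\promods$, so they lie in $\mathcal M$ and serve as bounds. If $L_1,L_2$ both satisfy $u\le v$, then so do $L_1\cup L_2$ and $L_1\cap L_2$: using $\overline{L_1\cup L_2}=\overline{L_1}\cup\overline{L_2}$ and $\overline{L_1\cap L_2}=\overline{L_1}\cap\overline{L_2}$, membership of $u$ in the combined closure forces membership of $v$ by applying the hypothesis to the relevant component. Hence $\mathcal M$ is a lattice, and this direction is essentially routine once the commutation of closure with finite Boolean operations is in hand.

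For the converse, let $\mathcal M$ be an arbitrary (bounded) lattice. I introduce the preorder on $\promods$ defined by $x\preceq y$ iff every $L\in\mathcal M$ with $x\in\overline{L}$ also has $y\in\overline{L}$, and I take the canonical equation set $E=\{u\le v:u\preceq v\}$. By construction every member of $\mathcal M$ satisfies $E$, so the content lies in the reverse inclusion: a recognisable $K$ satisfying $E$ must belong to $\mathcal M$. Now satisfying $E$ says exactly that $\overline{K}$ is upward closed for $\preceq$, so the goal reduces to showing that the clopen up-sets of $(\promods,\preceq)$ are precisely the closures of members of $\mathcal M$.

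This last statement is the heart of the argument and the main obstacle, and I would prove it by a two-stage compactness argument exploiting that $\promods$ is compact. Fix a clopen up-set $\overline{K}$. For $x\in\overline{K}$ and $y\notin\overline{K}$ one has $x\not\preceq y$ (otherwise the up-set property would force $y\in\overline{K}$), so the definition of $\preceq$ supplies some $L_{x,y}\in\mathcal M$ with $x\in\overline{L_{x,y}}$ and $y\notin\overline{L_{x,y}}$. Fixing $x$ and letting $y$ range over the compact set $\promods\setminus\overline{K}$, the open sets $\promods\setminus\overline{L_{x,y}}$ cover it, so finitely many suffice; intersecting the corresponding $L_{x,y}$ gives $L_x\in\mathcal M$ with $x\in\overline{L_x}\subseteq\overline{K}$. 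Then, letting $x$ range over the compact set $\overline{K}$, the clopen sets $\overline{L_x}$ cover it, finitely many suffice, and the union $L$ of the corresponding members of $\mathcal M$ satisfies $\overline{L}=\overline{K}$; injectivity of $L\mapsto\overline{L}$ yields $K=L\in\mathcal M$. The delicate points to verify are that $\promods$ and its clopen subsets are genuinely compact in the topology built earlier, and that the bounds $\emptyset$ and $\mods$ lie in $\mathcal M$, so that the degenerate empty covers (arising when $\overline{K}$ is $\emptyset$ or all of $\promods$) still produce legitimate members of $\mathcal M$.
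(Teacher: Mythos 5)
Your proposal is correct and takes essentially the same route as the paper: reduce via the isomorphism $L\mapsto\overline{L}$ to a statement about clopen subsets of the compact space $\promods$, then run the same two-stage compactness argument (first producing, for each $x\in\overline{K}$, some $L_x\in\mathcal M$ with $x\in\overline{L_x}\subseteq\overline{K}$ by covering the complement, then covering $\overline{K}$ by finitely many such $\overline{L_x}$). The only difference is presentational: the paper argues by contradiction using the set of all equations satisfied by $\mathcal M$, while you phrase the identical steps directly through the preorder $\preceq$.
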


In paper \cite{pin_duality_equational} authors show analogous theorem in the context of profinite monoids. It is a special case of our result where $\mods=\Sigma^\ast$ and $\forms$ are homomorphisms to finite monoids.

Paper \cite{pin_topological_recognition} is devoted to the idea of recognisers --- particular functions defined on a given topological (or more generally uniform) space. Authors show that each Boolean algebra of subsets of a space has a \emph{minimal recogniser}. Additionally various additional structures of the space (e.g. the structure of a monoid) provides additional properties of a recogniser. The key tool is the Stone-Priestley duality.

The approach presented in this paper is different. We start with a fixed family of recognisers and using them we provide adequate topology on $\mods$. Using this topology we extend the space to a profinite structure and study the particular algebra of recognisable sets.

\section{Profinite structures}

\begin{definition}
A \emph{framework} consists of
\begin{enumerate}
\item a countable set $\mods$ of objects,
\item a countable set $\forms$ of recognisers, i.e. functions $\fun{\varphi}{\mods}{K_\varphi}$ where $K_\varphi$ is a finite set.
\end{enumerate}

Additionally two properties must hold:

\begin{description}
\item[a)] Each object $w\in\mods$ is totally described by some recogniser. That is for every object $w\in\mods$ there is some recogniser $\varphi\in\forms$ such that $\varphi(w)\neq\varphi(w')$ for $w'\neq w$.
\item[b)] Recognisers are closed under intersections. That is for every recognisers $\varphi_1,\varphi_2\in\forms$ and every sets of values $V_1\in K_{\varphi_1}, V_2\in K_{\varphi_2}$ there exists a recogniser $\varphi\in\forms$ and a set $V\subseteq K_\varphi$ such that
$$\forall_{w\in \mods}\  \left(\varphi_1(w)\in V_1\wedge \varphi_2(w)\in V_2\right)\Leftrightarrow \varphi(w)\in V.$$
\end{description}
\end{definition}

It is easy to check that all the examples from the introduction satisfy both axioms so they form \emph{frameworks}.

Fix a framework $(\mods, \forms)$.

\begin{definition}
A language $L\subseteq \mods$ is called \emph{regular} or \emph{recognisable} if there exists $\varphi\in\forms$ and $V\subseteq K_\varphi$ such that
$$L=\varphi^{-1}(V).$$

The family of all regular languages is denoted as $\reg(\mods)$.
\end{definition}

\begin{observation}
$\reg(\mods)$ is a Boolean algebra.
\end{observation}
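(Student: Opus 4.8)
The plan is to verify the three defining closure properties of a Boolean algebra of subsets of $\mods$: closure under complement, closure under finite intersection, and membership of the top and bottom elements $\mods$ and $\emptyset$; closure under union then comes for free via De Morgan's laws. Throughout I would argue purely at the level of preimages, since by definition a language is regular exactly when it has the form $\varphi^{-1}(V)$ for some $\varphi\in\forms$ and $V\subseteq K_\varphi$.

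For complement, suppose $L=\varphi^{-1}(V)$ with $V\subseteq K_\varphi$. Since preimages commute with set complement, $\mods\setminus L=\varphi^{-1}(K_\varphi\setminus V)$, and $K_\varphi\setminus V$ is again a subset of the finite set $K_\varphi$, so the complement is regular. For the extreme elements, observe that for any recogniser $\varphi$ (one exists by axiom (a) as soon as $\mods$ is nonempty) we have $\emptyset=\varphi^{-1}(\emptyset)$ and $\mods=\varphi^{-1}(K_\varphi)$.

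The only step that genuinely uses the framework axioms is closure under intersection, and here axiom (b) does exactly what is needed. Given $L_1=\varphi_1^{-1}(V_1)$ and $L_2=\varphi_2^{-1}(V_2)$, the displayed equivalence in axiom (b) produces a recogniser $\varphi$ and a set $V\subseteq K_\varphi$ with $\{w:\varphi(w)\in V\}=\{w:\varphi_1(w)\in V_1\wedge\varphi_2(w)\in V_2\}$, that is $\varphi^{-1}(V)=L_1\cap L_2$; hence the intersection is regular, and iterating gives closure under all finite intersections.

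Finally, combining the two previous steps through $L_1\cup L_2=\mods\setminus\bigl((\mods\setminus L_1)\cap(\mods\setminus L_2)\bigr)$ yields closure under union, completing the verification that $\reg(\mods)$ is a Boolean algebra. I do not expect any real obstacle: the mathematical content is entirely concentrated in axiom (b), which was evidently introduced precisely to force closure under intersection, while every other property is a formal consequence of the elementary behaviour of preimages.
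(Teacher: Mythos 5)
Your proposal is correct and follows essentially the same route as the paper's (much terser) proof: closure under intersection from axiom (b), closure under complement from the elementary behaviour of preimages, and union via De Morgan. You have simply spelled out the details that the paper leaves implicit.
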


\begin{proof}
By Property b) regular languages are closed under intersection. Of course they are also closed under complementation. So they form a Boolean algebra.
\end{proof}

List all recognisers in $\forms$ in a sequence $\forms=\{\varphi_0,\varphi_1,\ldots\}$. It is good to think that recognisers appearing further in the sequence are \emph{more complicated}. Let $K_i=K_{\varphi_i}$ and let $X=\prod_{i\in\N} K_i$. Because $K_i$'s are finite,  $X$ is a homeomorphic copy of Cantors discontinuum. Let $\fun{\mu}{\mods}{X}$ be defined as follows
$$\mu(w)=\left(\varphi_0(w), \varphi_1(w),\varphi_2(w),\ldots\right).$$
In other words $\mu$ maps an object $w\in\mods$ to a sequence of values of all recognisers on that object.

It is easy to see that $\mu$ is 1-1 because of Property a).

\begin{definition}
Let $\promods = \overline{\mu(\mods)}\subseteq X$. The elements of the set $\promods$ are called \emph{profinite objects} of the framework $(\mods, \forms)$. The topology on $\promods$ is defined as a topology induced from $X$.
\end{definition}

Since the order of coordinates in Cantors discontinuum does not affect its topology, we obtain the following fact.

\begin{fact}
The construction described above does not depend on the order of recognisers in the sequence $\varphi_0,\varphi_1,\ldots$.
\end{fact}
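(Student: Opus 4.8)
The plan is to make the informal claim precise and then reduce it to the elementary topological fact that permuting the coordinates of a countable product of finite discrete spaces is a homeomorphism. Concretely, suppose we fix a second enumeration $\forms=\{\psi_0,\psi_1,\ldots\}$ of the same countable set of recognisers. Since both sequences enumerate the same set $\forms$, there is a bijection $\sigma:\N\to\N$ with $\psi_j=\varphi_{\sigma(j)}$ for every $j$. Write $X=\prod_{i\in\N}K_i$ and $X'=\prod_{j\in\N}K_{\sigma(j)}$ for the two copies of Cantor's discontinuum (the $j$-th factor of $X'$ being $K_{\psi_j}=K_{\varphi_{\sigma(j)}}=K_{\sigma(j)}$), and let $\mu,\mu'$ be the corresponding injections of $\mods$. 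The goal is to produce a canonical homeomorphism $\promods\to\promods'$ that intertwines $\mu$ and $\mu'$, which is the precise sense in which the construction is order-independent.

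First I would define the coordinate-reshuffling map $\fun{\pi}{X}{X'}$ by $\pi\big((x_i)_{i\in\N}\big)=(x_{\sigma(j)})_{j\in\N}$. Its $j$-th coordinate lies in $K_{\sigma(j)}$, which is exactly the $j$-th factor of $X'$, so $\pi$ is well defined, and the inverse permutation $\sigma^{-1}$ supplies a two-sided inverse, so $\pi$ is a bijection. Next I would check that $\pi$ is a homeomorphism: the composition of $\pi$ with the $j$-th projection $X'\to K_{\sigma(j)}$ is simply the $\sigma(j)$-th projection of $X$, which is continuous, so by the universal property of the product topology $\pi$ is continuous; applying the identical argument to $\sigma^{-1}$ shows $\pi^{-1}$ is continuous as well.

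Then I would verify the intertwining identity $\pi\circ\mu=\mu'$ by a one-line index computation: the $\sigma(j)$-th coordinate of $\mu(w)$ is $\varphi_{\sigma(j)}(w)=\psi_j(w)$, hence $\pi(\mu(w))=(\psi_j(w))_{j\in\N}=\mu'(w)$. Finally, since a homeomorphism commutes with topological closure, $\pi\big(\overline{\mu(\mods)}\big)=\overline{\pi(\mu(\mods))}=\overline{\mu'(\mods)}$, that is $\pi(\promods)=\promods'$. Therefore $\pi$ restricts to a homeomorphism $\promods\to\promods'$ respecting the embeddings of $\mods$, which is the claim.

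I do not expect a genuine obstacle: the whole content consists of the two elementary observations that coordinate permutations of the product are bi-continuous and that homeomorphisms preserve closures. The only points requiring care are the bookkeeping with the index bijection $\sigma$ and, more importantly, fixing the right reading of the statement. The result is useful only if order-independence is understood as the existence of a \emph{canonical homeomorphism commuting with $\mu$}, rather than a merely abstract homeomorphism of topological spaces; it is this stronger form that justifies treating $\promods$ as an object intrinsic to the framework $(\mods,\forms)$ in the subsequent development.
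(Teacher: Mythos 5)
Your proof is correct and follows the same idea as the paper, which dispatches this fact with the single remark that permuting the coordinates of Cantor's discontinuum does not change its topology. You have merely made that one-liner fully precise --- in particular by adding the intertwining condition $\pi\circ\mu=\mu'$ and the observation that homeomorphisms commute with closures, which is indeed the right reading of ``the construction does not depend on the order'' and is left implicit in the paper.
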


The following proposition summarises the properties of $\promods$.

\begin{proposition}
\mbox{}
\begin{enumerate}
\item $\mods$ naturally embeds into $\promods$,
\item $\promods$ is a compact topological space,
\item for each $w\in\mods$ the point $\mu(w)\in\promods$ is isolated,
\item $\mu(\mods)$ is a countable dense subset of $\promods$.
\end{enumerate}
\end{proposition}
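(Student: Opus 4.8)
The plan is to establish the four claims in order of increasing difficulty, working throughout with $\mods$ carrying the discrete topology and with the ambient space $X=\prod_{i\in\N}K_i$, which is compact and Hausdorff (it is the Cantor space, as already observed). The genuinely load-bearing part is claim (3); claims (2) and (4) are immediate, and claim (1) will follow as a corollary of (3).

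For claim (2): by definition $\promods=\overline{\mu(\mods)}$ is a closed subset of the compact space $X$, hence is itself compact. For claim (4): $\mu(\mods)$ is countable since $\mods$ is countable and $\mu$ is a function, and it is dense in $\promods$ precisely because $\promods$ is defined to be its closure; both assertions fall straight out of the definitions.

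The main work is claim (3). I would fix $w\in\mods$ and invoke Property a) to get an index $i$ for which $\varphi_i$ separates $w$ from all other objects, i.e. $\varphi_i(w')\neq\varphi_i(w)$ whenever $w'\neq w$. Then I would consider the cylinder $U=\{x\in X: x_i=\varphi_i(w)\}$. Since $K_i$ is finite and discrete, $U$ is clopen in $X$, and the separating property immediately gives $U\cap\mu(\mods)=\{\mu(w)\}$. The decisive step is to promote this from the dense subset to the whole of $\promods$: because $U$ is clopen, intersecting with $U$ commutes with taking closures, so
$$U\cap\promods=U\cap\overline{\mu(\mods)}=\overline{U\cap\mu(\mods)}=\overline{\{\mu(w)\}}=\{\mu(w)\},$$
the last equality using that $X$ is Hausdorff. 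Hence $\{\mu(w)\}$ is relatively open in $\promods$, which is exactly the statement that $\mu(w)$ is isolated. I expect this clopen-commutes-with-closure identity $U\cap\overline{A}=\overline{U\cap A}$ to be the only point needing a short argument — the inclusion $\subseteq$ uses that $U$ is open, and $\supseteq$ uses that $U$ is closed.

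Finally, claim (1) follows quickly: $\mu$ is injective by Property a) and trivially continuous (as $\mods$ is discrete), and by claim (3) the subspace topology that $\mu(\mods)$ inherits from $\promods$ is discrete, so it coincides with the topology of $\mods$. Thus $\mu$ is a homeomorphism onto its image, i.e. a topological embedding of $\mods$ into $\promods$.
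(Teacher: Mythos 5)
Your proposal is correct and follows essentially the same route as the paper, which merely asserts each item with a one-line justification (in particular, item (3) is dismissed there as ``an easy consequence of property a)''); your cylinder-set argument with the identity $U\cap\overline{A}=\overline{U\cap A}$ for clopen $U$ is exactly the intended filling-in of that claim. No gaps.
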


\begin{proof}
\mbox{}
\begin{enumerate}
\item The natural embedding is $\mu$.
\item $\promods$ is a closed subset of a compact space, so it is compact.
\item This is an easy consequence of property a) of the framework.
\item  $\promods$ is a closure of $\mu(\mods)$, so $\mu(\mods)$ is dense in $\promods$.
\end{enumerate}
\end{proof}

Note that recognisers naturally extend to $\promods$.

\begin{proposition}
For each recogniser $\varphi_i\in\forms$ we can extend it to all profinite objects $w\in \promods\subseteq\prod K_i$ by an equation
$$\varphi_i(w)= w_i\in K_i.$$

For $w\in \mods$ this definition is consistent with the original one.
\end{proposition}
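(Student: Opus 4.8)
The plan is to observe that the claimed extension is nothing more than coordinate projection, so that both the well-definedness of the extended map and the consistency claim reduce to unwinding the definitions of $X$, $\mu$, and $\promods$.

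First I would note that by the definition of $\promods$ we have $\promods = \overline{\mu(\mods)} \subseteq X = \prod_{i\in\N} K_i$. Hence every profinite object $w \in \promods$ is literally a tuple $w = (w_0, w_1, \ldots)$ with $w_i \in K_i$ for each $i$. Thus the assignment $w \mapsto w_i$ is a well-defined function $\fun{\varphi_i}{\promods}{K_i}$: it is simply the restriction to $\promods$ of the $i$-th coordinate projection $\fun{\pi_i}{X}{K_i}$. No choice is involved, and nothing needs to be checked for well-definedness beyond the evident fact that $w_i \in K_i$.

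The consistency claim is the only point carrying content, and it is immediate. For $w \in \mods$ the embedding into $\promods$ is the point $\mu(w)$, and by the very definition of $\mu$ its $i$-th coordinate is $\varphi_i(w)$. Therefore the extended recogniser evaluated at $\mu(w)$ returns $(\mu(w))_i = \varphi_i(w)$, which is exactly the original value. Here one must only keep in mind that objects of $\mods$ are identified with their images under the embedding $\mu$, so that ``$w \in \mods$'' is read as the profinite object $\mu(w) \in \promods$.

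I do not expect any genuine obstacle: the statement is a definitional unwinding rather than a theorem with a substantive argument. The only point worth flagging --- and worth recording for later use --- is that, being a restriction of the continuous projection $\pi_i$ (with $K_i$ carried with the discrete topology), the extended map $\fun{\varphi_i}{\promods}{K_i}$ is automatically continuous. This continuity, though not asserted in the statement, is what makes the extension the correct one and will presumably be exploited when extending recognisable sets and profinite equations to $\promods$.
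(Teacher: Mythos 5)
Your proof is correct and is exactly the intended argument: the paper states this proposition without proof, treating it as the definitional unwinding you give --- the extension is the restriction of the $i$-th coordinate projection to $\promods\subseteq\prod_{i\in\N}K_i$, and consistency on $\mods$ is immediate since the $i$-th coordinate of $\mu(w)$ is $\varphi_i(w)$ by the definition of $\mu$. Your added remark that the extended recogniser is automatically continuous (with $K_i$ discrete) is a correct and worthwhile observation that the paper implicitly relies on later, e.g.\ in Lemma~\ref{l:duality}.
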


The following lemma enables us to define regular languages from the topological point of view.

\begin{lemma}\label{l:duality}
A language $L\subseteq \mods$ is regular if and only if $\overline{L}$ is an clopen subset of $\promods$.

Moreover $L\mapsto \overline{L}$ is an isomorphism of the Boolean algebra of regular languages in $\mods$ and the Boolean algebra of clopen subsets of $\promods$.
\end{lemma}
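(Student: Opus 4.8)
The plan is to work entirely inside the product space $X=\prod_i K_i$, using two features of the construction: each single-coordinate constraint $\{x\in X:x_i\in V\}$ (with $V\subseteq K_i$) is clopen in $X$, and $\mu(\mods)$ is a dense set of isolated points of $\promods$. First I would establish the key identity
$$\overline{\varphi_i^{-1}(V)}=\{w\in\promods: w_i\in V\}.$$
The right-hand side is clopen in $\promods$ and contains $\mu(\varphi_i^{-1}(V))$, which gives the inclusion $\subseteq$; for $\supseteq$ I would use density of $\mu(\mods)$, since any $w\in\promods$ with $w_i\in V$ is a limit of points $\mu(w^{(n)})$ which must eventually satisfy $w^{(n)}_i\in V$, the constraint being open. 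This identity already shows that a regular language $L=\varphi_i^{-1}(V)$ has clopen closure, and that $L\mapsto\overline{L}$ sends Boolean operations to the corresponding operations on these single-coordinate clopen sets.

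For the converse and the isomorphism I would introduce the restriction map $\Psi(C)=\{w\in\mods:\mu(w)\in C\}$, which is manifestly a Boolean homomorphism from the clopen subsets of $\promods$ to subsets of $\mods$, and prove it is a two-sided inverse of $L\mapsto\overline{L}$. That $\Psi(C)$ is \emph{regular} for clopen $C$ is where Property b) enters: by compactness of $\promods$ every clopen $C$ is a finite union of basic cylinders $\{w:w_{i_1}\in V_1,\dots,w_{i_n}\in V_n\}$, each of which restricts to a finite intersection $\bigcap_{l}\varphi_{i_l}^{-1}(V_l)$, regular by repeated application of Property b), and a finite union of regular languages is regular.

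It then remains to verify the two round-trips. For $\Psi(\overline{L})=L$ I would invoke isolation: if $\mu(w)$ lies in the closure of $\mu(L)$, then the open singleton $\{\mu(w)\}$ meets $\mu(L)$, forcing $w\in L$. For $\overline{\Psi(C)}=C$ I would use that $C$ is open, so $C\cap\mu(\mods)$ is dense in $C$, while $C$ is closed, whence its closure equals $C$. Being a bijective Boolean homomorphism, $\Psi$ is an isomorphism, and hence so is its inverse $L\mapsto\overline{L}$.

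I expect the main obstacle to be the regularity step inside $\Psi$: turning an abstract clopen set into an actual recogniser. Its two ingredients, compactness of $\promods$ to cut an arbitrary clopen set down to finitely many coordinates, and Property b) to fold those finitely many recognisers into a single one, are precisely what the framework axioms were arranged to supply. The remaining verifications that the two compositions are identities are where density of $\mu(\mods)$ and isolation of its points do the work, and these are routine once the single-coordinate identity above is in place.
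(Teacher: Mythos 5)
Your proof is correct and follows essentially the same route as the paper's (two-sentence) argument: closures of regular languages are exactly the basic clopen cylinders, compactness reduces an arbitrary clopen set to a finite Boolean combination of these, and Property b) closes the regular languages under the corresponding operations. You simply supply the details the paper elides — in particular the inverse map $\Psi$ and the round-trip identities via isolation and density, which is the natural way to justify the ``isomorphism'' claim.
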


\begin{proof}
Regular languages form a clopen base of the topology and they are closed under finite Boolean operations. In a compact space each clopen subset is a Boolean combination of the clopen base sets.
\end{proof}

\section{Topology}

In this section we provide a general characterisation of sublattices and Boolean subalgebras of the algebra of clopen subsets of a compact space. This whole theory is based only on topological properties and holds for any compact space.

Fix $X$ to be a compact space and let $\bcap$ denote the Boolean algebra of clopen subsets of $X$.

\begin{definition}
A sublattice of $\bcap$ is any subset $\mathcal M\subseteq \bcap$ closed under union and intersection and containing $\emptyset$ and $X$.
\end{definition}

\begin{definition}\label{d:eq_top}
An \emph{equation} is a formula $u\to v$ for $u,v\in X$. We say that a subset $A\subseteq X$ \emph{satisfies} an equation $u\to v$ if the following property holds
$$u\in A\Rightarrow v\in A.$$

For a given set of equations $\mathcal E\subseteq X^2$, the family of subsets that satisfy all equations in $\mathcal E$ will be called a family of subsets defined by $\mathcal E$.
\end{definition}

\begin{theorem}\label{t:lattice_top}
For a given compact space $X$ a family $\mathcal M\subseteq \bcap(X)$ is a lattice if and only if it is defined by some set of equations.
\end{theorem}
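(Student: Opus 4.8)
The plan is to prove the two implications separately, with the forward direction (equations give a lattice) being routine and the converse (a lattice comes from equations) carrying the real content.

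First I would verify that the clopen sets defined by a set of equations $\mathcal E\subseteq X^2$ form a lattice. The empty set satisfies every equation $u\to v$ vacuously, since the premise $u\in\emptyset$ never holds, and $X$ satisfies every equation trivially, since the conclusion $v\in X$ always holds; so both belong to the family. For closure under intersection, if $A$ and $B$ both satisfy $u\to v$ and $u\in A\cap B$, then $v$ lies in each of $A,B$, hence in $A\cap B$; closure under union is the symmetric observation that $u\in A\cup B$ places $u$ in one of the two sets, which then forces $v$ in as well. Since $\bcap(X)$ is itself a sublattice containing $\emptyset$ and $X$, restricting the defined family to clopen sets preserves all of these properties.

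For the converse, given a lattice $\mathcal M\subseteq\bcap(X)$ I would take the largest candidate set of equations, namely
$$\mathcal E = \{\, u\to v : \forall A\in\mathcal M,\ u\in A\Rightarrow v\in A \,\}.$$
By construction every member of $\mathcal M$ satisfies all of $\mathcal E$, so $\mathcal M$ is contained in the family $\mathcal M'$ of clopen sets defined by $\mathcal E$. The whole difficulty is the reverse inclusion $\mathcal M'\subseteq\mathcal M$: I must show that a clopen set $B$ satisfying every equation in $\mathcal E$ already lies in $\mathcal M$.

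Here I would argue by a double application of compactness. Fix $B\in\mathcal M'$ and a point $u\in B$. For each $v\notin B$ the pair $u\to v$ cannot belong to $\mathcal E$, for otherwise $B$ would contain $v$; hence there is some $A_{u,v}\in\mathcal M$ with $u\in A_{u,v}$ and $v\notin A_{u,v}$. The complements $X\setminus A_{u,v}$ are open and cover the compact set $X\setminus B$, so finitely many suffice, and intersecting the corresponding finitely many $A_{u,v}$ yields, by closure of $\mathcal M$ under finite intersection, a member $A_u\in\mathcal M$ with $u\in A_u\subseteq B$. Now the sets $A_u$, as $u$ ranges over the compact set $B$, form an open cover of $B$; extracting a finite subcover and taking the finite union, which stays inside $B$ yet still covers it, exhibits $B$ as a finite union of members of $\mathcal M$, so $B\in\mathcal M$ by closure under finite union. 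The degenerate cases $B=\emptyset$ and $B=X$ are handled by the requirement that a lattice contain $\emptyset$ and $X$. The main obstacle is precisely this reverse inclusion: the first compactness step collapses the infinitely many separating witnesses for a fixed $u$ into one neighbourhood $A_u\subseteq B$, and the second collapses the cover of $B$ by these neighbourhoods into a finite union, both steps relying essentially on compactness of $X$ together with closure of $\mathcal M$ under finite meets and joins.
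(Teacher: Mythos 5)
Your proof is correct and follows essentially the same route as the paper: both take $\mathcal E$ to be the set of all equations satisfied by every member of $\mathcal M$ and then apply compactness twice, once to shrink the separating witnesses for a fixed point $u\in B$ into a single $A_u\in\mathcal M$ with $u\in A_u\subseteq B$, and once to cover $B$ by finitely many such sets. The only difference is presentational: you argue directly that every point of $B$ has an $\mathcal M$-neighbourhood inside $B$, whereas the paper runs the contrapositive, assuming a bad point exists and exhibiting a violated equation $x\to y$.
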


\begin{proof}
Of course for a given set of equations, a family of clopen subsets satisfying them is a lattice.

Take any lattice $\mathcal M\subseteq\bcap$. Let $\mathcal E$ be a set of all equations satisfied by every set in $\mathcal M$. Take any subset $A\in\bcap$ satisfying $\mathcal E$. We will show that $A\in\mathcal M$.

Let $$\mathcal U=\left\{M\in\mathcal M: M\subseteq A\right\}.$$ If $\bigcup \mathcal U= A$, then $\mathcal U$ is a covering by open sets of a compact $A$, so there is a finite family $M_1,M_2,\ldots,M_n\in \mathcal U$ such that $\bigcup M_i= A$. So $A\in\mathcal M$. Assume by contradiction that there exists $x\in A\setminus \bigcup \mathcal U$.

Consider $$\mathcal V=\left\{M\in\mathcal M: x\in M\right\}.$$ If $\bigcap \mathcal V\subseteq A$, then complements of elements in $\mathcal V$ cover $X\setminus A$ so there is a finite family $M_1,M_2,\ldots,M_n\in\mathcal V$ such that $\bigcap M_i\subseteq A$. But then $x\in\bigcap M_i\subseteq A$ and $\bigcap M_i\in\mathcal M$, so a contradiction to the fact that $x\notin \bigcup \mathcal U$. So there exists $y\in\bigcap \mathcal V\setminus A$.

Consider equation $x\to y$ and any $N\in\mathcal M$ such that $x\in N$. Then $N\in\mathcal V$ so by the definition $y\in N$ so $N$ satisfies $x\to y$. Therefore $(x\to y)\in\mathcal E$. But $A$ does not satisfy $x\to y$. A contradiction.
\end{proof}

\begin{corollary}
A family $\mathcal M\subseteq \bcap$ is a Boolean subalgebra if and only if it is defined by a symmetric set of equations $u\leftrightarrow v\equiv u\to v,v\to u$.
\end{corollary}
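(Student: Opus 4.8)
The plan is to reduce everything to Theorem~\ref{t:lattice_top}, relying on a single observation: complementation exchanges the two directions of an implication. Concretely, for any clopen $M\subseteq X$ and any $u,v\in X$, the set $M$ satisfies $u\to v$ if and only if $X\setminus M$ satisfies $v\to u$. Indeed, ``$M$ satisfies $u\to v$'' means $u\in M\Rightarrow v\in M$, whose contrapositive $v\notin M\Rightarrow u\notin M$ is precisely the statement that $X\setminus M$ satisfies $v\to u$. I would state this correspondence first, since both directions of the corollary rest on it.

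For the direction asserting that a family defined by a symmetric set of equations is a Boolean subalgebra, suppose $\mathcal M$ is defined by a symmetric $\mathcal E$. By Theorem~\ref{t:lattice_top}, $\mathcal M$ is already a lattice, so it remains only to prove closure under complement. Take $A\in\mathcal M$; note $X\setminus A$ is clopen. For any symmetric equation $u\leftrightarrow v$ in $\mathcal E$, the set $A$ satisfies both $u\to v$ and $v\to u$, so $u\in A\Leftrightarrow v\in A$, hence $u\in X\setminus A\Leftrightarrow v\in X\setminus A$, and thus $X\setminus A$ satisfies $u\leftrightarrow v$ as well. Since this holds for every equation in $\mathcal E$, the clopen set $X\setminus A$ satisfies all of $\mathcal E$ and therefore lies in $\mathcal M$.

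For the converse, suppose $\mathcal M$ is a Boolean subalgebra. In particular it is a lattice, so by the construction in the proof of Theorem~\ref{t:lattice_top} it is defined by $\mathcal E$, the set of all equations satisfied by every member of $\mathcal M$. It suffices to check that this $\mathcal E$ is symmetric. If $(u\to v)\in\mathcal E$, then every $M\in\mathcal M$ satisfies $u\to v$; by the correspondence above, $X\setminus M$ satisfies $v\to u$. Because $\mathcal M$ is closed under complement, $X\setminus M$ ranges over all of $\mathcal M$ as $M$ does, so every member of $\mathcal M$ satisfies $v\to u$, giving $(v\to u)\in\mathcal E$. Hence $\mathcal E$ is symmetric and defines $\mathcal M$.

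The argument is short precisely because Theorem~\ref{t:lattice_top} supplies the lattice structure in both directions, so no fresh compactness argument is required. The only point demanding care is the complement--equation correspondence, and in particular its use in the second direction: it is exactly closure of $\mathcal M$ under complement that lets $X\setminus M$ exhaust $\mathcal M$ and thereby forces $\mathcal E$ to contain both $u\to v$ and $v\to u$. I do not anticipate any genuine obstacle beyond this bookkeeping.
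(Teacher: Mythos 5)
Your proof is correct and is exactly the argument the corollary is meant to follow from Theorem~\ref{t:lattice_top} (the paper states it without proof): the complement--equation correspondence $M \models (u\to v) \Leftrightarrow X\setminus M \models (v\to u)$ is the key point, and both your use of it to get closure under complement and your use of closure under complement to symmetrize the canonical set $\mathcal E$ are sound.
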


\section{Main theorem}

Using the topological result from the previous section we can prove the main theorem from the introduction.

\begin{theorem}
A family $\mathcal M$ of recognisable sets of objects is a lattice if and only if it is defined by a set of profinite equations.
\end{theorem}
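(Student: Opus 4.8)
The plan is to reduce the statement about profinite equations directly to the purely topological result, Theorem~\ref{t:lattice_top}, applied to the compact space $X=\promods$. The bridge between the two worlds is Lemma~\ref{l:duality}, which tells us that the map $L\mapsto\overline{L}$ is a Boolean-algebra isomorphism between $\reg(\mods)$ and the clopen subsets $\bcap(\promods)$. So first I would translate the hypothesis: a family $\mathcal M\subseteq\reg(\mods)$ is a lattice of recognisable sets (closed under finite unions and intersections, containing $\emptyset$ and $\mods$) if and only if its image $\overline{\mathcal M}=\{\overline{L}:L\in\mathcal M\}$ is a lattice of clopen subsets of $\promods$ in the sense of Theorem~\ref{t:lattice_top}. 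Since $L\mapsto\overline{L}$ is an isomorphism, it preserves and reflects the lattice operations, so this equivalence is immediate.

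Next I would simply invoke Theorem~\ref{t:lattice_top} on $X=\promods$: the clopen lattice $\overline{\mathcal M}$ is defined by some set of equations $\mathcal E\subseteq\promods\times\promods$ in the sense of Definition~\ref{d:eq_top}. The remaining work is to interpret ``defined by equations over $\promods$'' as ``defined by profinite equations'' at the level of the original recognisable sets. Here one spells out that a \emph{profinite equation} is precisely a pair $u\to v$ with $u,v\in\promods$, and that a recognisable language $L$ \emph{satisfies} $u\to v$ exactly when its closure does, i.e. when $u\in\overline{L}\Rightarrow v\in\overline{L}$. Under the isomorphism of Lemma~\ref{l:duality} this is the faithful translation of Definition~\ref{d:eq_top}, so $L\in\mathcal M$ iff $\overline{L}$ satisfies every equation in $\mathcal E$ iff $L$ satisfies every equation in $\mathcal E$ viewed as profinite equations.

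The proof therefore assembles as: (i) recast $\mathcal M$ through the duality of Lemma~\ref{l:duality}; (ii) apply the abstract Theorem~\ref{t:lattice_top}; (iii) transport the defining equations back along the isomorphism. The main conceptual obstacle — really the only nontrivial point — is making the definition of a profinite equation and of satisfaction coherent with the topological definition, so that the two notions of ``defined by equations'' literally coincide under $L\mapsto\overline{L}$. Once that dictionary is fixed, both directions of the equivalence follow formally, since an isomorphism of Boolean algebras carries lattices to lattices and respects the containment relation $u\in\overline{L}$ that underlies satisfaction of an equation. I expect no genuine difficulty beyond this bookkeeping, because all the analytic content (compactness arguments producing finite subcovers) has already been absorbed into Theorem~\ref{t:lattice_top}.
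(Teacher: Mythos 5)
Your proposal is correct and follows exactly the paper's own argument: pass to $\overline{\mathcal M}$ via the isomorphism of Lemma~\ref{l:duality}, observe that Definition~\ref{d:eq_reg} is the transport of Definition~\ref{d:eq_top} along $L\mapsto\overline{L}$, and apply Theorem~\ref{t:lattice_top} to the compact space $\promods$. There is nothing to add.
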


Firstly we adopt the definition of equation to the case of regular languages. This definition follows the one from \cite{pin_duality_equational}.

\begin{definition}\label{d:eq_reg}
We say that a regular language $L\subseteq \mods$ satisfies an equation $u\to v$ for $u,v\in \promods$ if
$$u\in\overline{L}\ \Rightarrow\ v\in\overline{L}.$$
\end{definition}

In other words language $L$ recognised by a recogniser $\varphi$ and a set of values $K\subseteq K_\varphi$ satisfies $u\to v$ iff
$$\varphi(u)\in K\ \Rightarrow\ \varphi(v)\in K.$$

\begin{proof}[Proof of Theorem \ref{t:lattices}]
Take any subfamily of regular languages $\mathcal M\subseteq \reg(\mods)$ and consider
$$\overline{\mathcal M}=\{\overline{L}\subseteq \promods:L\in\mathcal M\}.$$

By Lemma \ref{l:duality} $\overline{\mathcal M}\subseteq \bcap(\promods)$ and $\mathcal M$ is a lattice if and only if $\overline{\mathcal M}$ is a lattice. Additionally $\mathcal M$ is defined by a set of equations in the meaning of Definition \ref{d:eq_reg} if and only if $\overline{\mathcal M}$ is defined by a set of equations in the meaning of Definition \ref{d:eq_top}.

But $\overline{\mathcal M}$ is defined by a set of equations of and only if it is a lattice, because of Theorem \ref{t:lattice_top} and the fact that $\promods$ is a compact topological space.
\end{proof}

\begin{corollary}
A family $\mathcal M$ of recognisable languages of objects is a Boolean algebra if and only if it is defined by a set of profinite symmetric equations.
\end{corollary}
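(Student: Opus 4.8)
The plan is to imitate the proof of Theorem~\ref{t:lattices} almost verbatim, replacing its topological ingredient (Theorem~\ref{t:lattice_top}) by the Boolean refinement recorded in the corollary to Theorem~\ref{t:lattice_top}: a family $\overline{\mathcal M}\subseteq\bcap$ is a Boolean subalgebra precisely when it is defined by a symmetric set of equations $u\leftrightarrow v\equiv u\to v,\,v\to u$.

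First I would fix a subfamily $\mathcal M\subseteq\reg(\mods)$ and pass to the profinite side via $\overline{\mathcal M}=\{\overline L:L\in\mathcal M\}$. By Lemma~\ref{l:duality} the map $L\mapsto\overline L$ is an isomorphism between the Boolean algebra $\reg(\mods)$ and the Boolean algebra $\bcap(\promods)$ of clopen subsets of the compact space $\promods$. A Boolean-algebra isomorphism sends subalgebras to subalgebras in both directions, so $\mathcal M$ is a Boolean subalgebra of $\reg(\mods)$ if and only if $\overline{\mathcal M}$ is a Boolean subalgebra of $\bcap(\promods)$.

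The second step is to align the two notions of being \emph{defined by symmetric equations}. For regular languages a symmetric equation is a pair $u\to v$, $v\to u$ with $u,v\in\promods$, which by Definition~\ref{d:eq_reg} is satisfied by $L$ exactly when $u\in\overline L\Leftrightarrow v\in\overline L$; by Definition~\ref{d:eq_top} this is exactly the condition under which the clopen set $\overline L$ satisfies the symmetric topological equation $u\leftrightarrow v$. Hence $\mathcal M$ is defined by a symmetric set of equations in the sense of Definition~\ref{d:eq_reg} if and only if $\overline{\mathcal M}$ is defined by a symmetric set of equations in the sense of Definition~\ref{d:eq_top}. Applying the corollary to Theorem~\ref{t:lattice_top} to the compact space $X=\promods$ then closes the chain: $\overline{\mathcal M}$ is a Boolean subalgebra iff it is defined by a symmetric set of equations, and the claim follows.

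I expect no serious obstacle, since the argument merely transports the main theorem through the duality of Lemma~\ref{l:duality}. The only point requiring care is the faithful correspondence between symmetric equations on languages and symmetric equations on clopen sets --- verifying that the symmetry condition \emph{both} $u\to v$ \emph{and} $v\to u$ is preserved by $L\mapsto\overline L$, and that the underlying topological corollary indeed produces Boolean subalgebras, rather than merely lattices, from symmetric equation systems.
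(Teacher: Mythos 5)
Your proposal is correct and follows exactly the route the paper intends: the corollary is the same transport argument as the proof of Theorem~\ref{t:lattices}, with the topological Theorem~\ref{t:lattice_top} replaced by its Boolean-subalgebra corollary, and the duality of Lemma~\ref{l:duality} carrying both the subalgebra property and the satisfaction of symmetric equations across $L\mapsto\overline{L}$. The paper omits the proof precisely because it is this routine adaptation, so nothing further is needed.
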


\section{Conclusions}

In this section we propose various applications of the main theorem.

Firstly consider a case when $\mods=\Sigma^\ast$ and $\forms$ consists of all homomorphisms of $\Sigma^\ast$ to finite monoids. In that case $\promods$ is just a space of profinite words $\prof{\Sigma}$ and Theorem \ref{t:lattices} coincides with Theorem 5.2 from \cite{pin_duality_equational}.

It turns out that Theorem \ref{t:lattices} gives some insight into the structure of lattices of first order formulas. Fix a relational signature $\Sigma$ and consider $\mods$ as a family of all finite structures over $\Sigma$. Let $\forms$ be a set of all FO sentences over $\Sigma$. Of course $\mods,\forms$ is a framework. We denote this framework as \emph{first order framework} over signature $\Sigma$.

\begin{lemma}\label{l:compactness}
If $\mods, \forms$ is a first order framework then for every profinite object $w\in\promods$ there exists a structure $S(w)$ over $\Sigma$ such that for all FO sentences $\varphi\in\forms$
$$\varphi(w)=\top\ \Leftrightarrow\ S(w)\models \varphi.$$
\end{lemma}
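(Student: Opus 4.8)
The plan is to read a profinite object $w\in\promods$ as a complete first order theory and then invoke the compactness theorem of first order logic to realise that theory by an actual structure. Concretely, I would associate with $w$ the set of sentences it makes true,
$$T_w=\{\varphi\in\forms:\varphi(w)=\top\},$$
where $\varphi(w)$ is the extension of the recogniser $\varphi$ to $\promods$ from the earlier proposition (so $\varphi(w)$ is simply the coordinate of $w$ at $\varphi$). The candidate structure $S(w)$ will be any model of $T_w$, and the whole argument reduces to showing that $T_w$ has a model and that it decides every sentence, which together force the stated biconditional.

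First I would check that $T_w$ is finitely satisfiable. Since $\promods=\overline{\mu(\mods)}$, the set $\mu(\mods)$ is dense, so every basic open neighbourhood of $w$, i.e. every set fixing finitely many coordinates $\varphi_{i_1},\ldots,\varphi_{i_n}$, meets $\mu(\mods)$. Hence for any finite subset $\{\varphi_{i_1},\ldots,\varphi_{i_n}\}\subseteq T_w$ there is a genuine finite structure $S\in\mods$ with $\varphi_{i_k}(S)=\top$ for all $k$, that is $S\models\varphi_{i_1}\wedge\cdots\wedge\varphi_{i_n}$. Thus every finite fragment of $T_w$ is satisfiable, and by the compactness theorem $T_w$ admits a model; I take $S(w)$ to be any such model.

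It remains to verify $\varphi(w)=\top\Leftrightarrow S(w)\models\varphi$ for every sentence $\varphi$. The forward direction is immediate, since then $\varphi\in T_w$ and $S(w)$ is a model of $T_w$. For the converse I would use that $\forms$ is closed under negation: on every finite structure $\varphi$ and $\neg\varphi$ take opposite truth values, so the clopen subset
$$\{x\in X:\ \varphi(x)=\top\ \Leftrightarrow\ (\neg\varphi)(x)=\bot\}$$
(which depends on two coordinates only) contains the dense set $\mu(\mods)$ and is therefore all of $\promods$. Consequently exactly one of $\varphi(w)$ and $(\neg\varphi)(w)$ equals $\top$, so if $\varphi(w)\neq\top$ then $\neg\varphi\in T_w$, whence $S(w)\models\neg\varphi$ and $S(w)\not\models\varphi$. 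This closes the biconditional.

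The main obstacle is the passage from the finite approximants $S\in\mods$ to the single structure $S(w)$, which need not be finite and is produced only non-constructively: this is exactly where the compactness theorem enters, and it is the only step that genuinely uses a property of first order logic rather than the abstract topology of $\promods$. The framework axioms play only an auxiliary role, guaranteeing that $\forms$ is closed under the Boolean operations (negation, conjunction) needed to transfer such clopen identities from the dense set $\mu(\mods)$ to all profinite objects.
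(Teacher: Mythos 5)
Your proof is correct and is exactly the argument the paper compresses into ``an easy application of the compactness theorem'': read $w$ as the finitely satisfiable theory $T_w$ (via density of $\mu(\mods)$ in $\promods$), take a model by compactness, and use closure of $\forms$ under negation to get the reverse implication. No substantive difference from the paper's intended proof.
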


\begin{proof}
An easy application of the compactness theorem.
\end{proof}

Of course there are infinite structures that don't represent any element of $\promods$. For example take $\Sigma=\{\leq\}$ and let $\varphi$ express that $\leq$ is a linear order and that $\forall_x\exists_y\ x<y$. Then $(\omega,\leq)$ is a model for $\varphi$ but no finite structure satisfies it. So $(\omega,\leq)$ does not represent any element of $\promods$.

\begin{theorem}
Every lattice of first order formulas $\mathcal M\subseteq FO(\Sigma)$ is defined by some family of implications $(u\to v)$, where $u,v$ are (potentially infinite) $\Sigma$-structures.

We say that a formula $\varphi$ satisfies an implication $(u\to v)$  iff
$$(u\models\varphi)\ \Rightarrow \ (v\models\varphi).$$
\end{theorem}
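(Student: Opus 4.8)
The plan is to transport the statement into the language of recognisable sets and then invoke Theorem \ref{t:lattices}. First I would identify each sentence $\varphi\in FO(\Sigma)$ with the class of its finite models
$$L_\varphi=\{w\in\mods:w\models\varphi\}=\varphi^{-1}(\top)\in\reg(\mods).$$
Since $\neg\varphi$, $\bot$ and $\top$ are again sentences, the map $\varphi\mapsto L_\varphi$ is onto $\reg(\mods)$, and it carries conjunction to intersection and disjunction to union. Consequently a lattice $\mathcal M\subseteq FO(\Sigma)$ (closed under $\wedge$, $\vee$ and containing $\bot,\top$) is sent to a sublattice $\mathcal L=\{L_\varphi:\varphi\in\mathcal M\}\subseteq\reg(\mods)$, closed under $\cap$, $\cup$ and containing $\emptyset$, $\mods$.

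Next I would apply Theorem \ref{t:lattices} to $\mathcal L$, obtaining a family of profinite equations $\mathcal E\subseteq\promods\times\promods$ that defines $\mathcal L$. It then remains to rewrite each profinite equation $u\to v$, with $u,v\in\promods$, as an implication between structures, and here Lemma \ref{l:compactness} does the work: for $u\in\promods$ it supplies a (possibly infinite) $\Sigma$-structure $S(u)$ with $\varphi(u)=\top\iff S(u)\models\varphi$ for every sentence $\varphi$. Combining this with the extension of recognisers to $\promods$ and with Definition \ref{d:eq_reg}, for any formula $\varphi$ we obtain the chain
$$u\in\overline{L_\varphi}\iff\varphi(u)=\top\iff S(u)\models\varphi,$$
so $L_\varphi$ satisfies the profinite equation $u\to v$ exactly when $(S(u)\models\varphi)\Rightarrow(S(v)\models\varphi)$, i.e.\ exactly when $\varphi$ satisfies the structural implication $S(u)\to S(v)$.

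Collecting the translated implications into $\mathcal E'=\{S(u)\to S(v):(u\to v)\in\mathcal E\}$ then closes the loop: a sentence $\varphi$ satisfies every implication in $\mathcal E'$ if and only if $L_\varphi$ satisfies every equation in $\mathcal E$, which happens if and only if $L_\varphi\in\mathcal L$. Since $\varphi\mapsto L_\varphi$ maps $\mathcal M$ onto $\mathcal L$, this is precisely the definability of $\mathcal M$ by the family $\mathcal E'$ of structural implications.

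The main obstacle, and the point needing the most care, is exactly this last step: the correspondence $\varphi\mapsto L_\varphi$ is many-to-one, since logically equivalent sentences define the same class of finite models and hence satisfy the same structural implications. A family of implications can therefore only carve out $\mathcal M$ up to logical equivalence, so I would either state the theorem for lattices $\mathcal M$ closed under logical equivalence or read the conclusion modulo $\equiv$. The remaining verifications, namely that $\wedge,\vee$ genuinely correspond to $\cap,\cup$ on finite models and that $\bot,\top$ give $\emptyset,\mods$, are routine and follow directly from the semantics of first order logic.
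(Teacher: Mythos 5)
Your proof follows exactly the paper's route: apply Theorem \ref{t:lattices} to the lattice of classes of finite models and then use Lemma \ref{l:compactness} to replace each profinite object $u$ by a structure $S(u)$; the paper's own proof is precisely this, stated in three lines, and your version just fills in the (correct) verifications. The caveat you raise about the map $\varphi\mapsto L_\varphi$ being many-to-one is a genuine subtlety the paper glosses over --- the theorem as stated implicitly requires $\mathcal M$ to be closed under equivalence over finite structures --- but it does not affect the substance of the argument.
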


\begin{proof}
By \ref{t:lattices} a lattice of formulas is defined by some family of profinite equations. But using \ref{l:compactness} each profinite object $w\in\promods$ can be interpreted as a structure $S(w)$. Therefore, we can put structures instead of profinite objects in all equations.
\end{proof}

\subsection{Acknowledgements}

The author would like to thank Mikołaj Bojańczyk, Damian Niwiński and Henryk Michalewski for their helpful comments.

\bibliography{mskrzypczak}

\end{document}